\documentclass[12pt]{article}

\usepackage{float}
\usepackage[makeroom]{cancel}
\usepackage{color}
\usepackage{graphicx}
\usepackage{amsmath}
\usepackage{amssymb}
\usepackage{xspace}
\usepackage[small]{subfigure}
\usepackage[numbers,compress]{natbib}
\usepackage[hyperfootnotes=false]{hyperref}
\usepackage{tcolorbox}

\usepackage{framed}
\usepackage{physics}
\usepackage{tensor}

\newlength{\fighskip} \fighskip=2pt
\newlength{\figvskip} \figvskip=3pt

\usepackage{mciteplus} 
\usepackage{dcolumn}
\usepackage{bm}
\usepackage{verbatim}
\usepackage{amscd}
\usepackage{amsfonts}
\usepackage{setspace}
\usepackage{amsthm}
\usepackage{enumerate}
\usepackage{mathtools}

\theoremstyle{plain}
\newtheorem{theorem}{Theorem}
\newtheorem{lemma}[theorem]{Lemma}

\theoremstyle{definition}

\DeclareMathOperator{\Sep}{Sep}
\DeclareMathOperator{\supp}{supp}

\newcommand{\defeq}{\stackrel{\text{def}}{=}}

\newcommand{\ED}{E_{D}}
\newcommand{\ER}{E_{R}}
\newcommand{\EF}{E_{F}}

\newcommand{\IAB}{I(A{:}B)_{\rho}}
\newcommand{\dmin}{d_{-}}
\newcommand{\dmax}{d_{+}}

\usepackage{authblk}

\title{
\vspace{-80pt}
\hfill
{\normalsize RUP-26-18}\\
\vspace{40pt}
\bf 
Relative entropy of entanglement of \\ Haar random states
}
\author[1,2]{Takato Mori\thanks{takato.mori@yukawa.kyoto-u.ac.jp}}
\author[3,4]{Beni Yoshida\thanks{byoshida@perimeterinstitute.ca}}
\affil[1]{\em \small Department of Physics, Rikkyo University, \protect\\
3-34-1 Nishi-Ikebukuro, Toshima-ku, Tokyo 171-8501, Japan}
\affil[2]{\em \small RIKEN Quantum, Transformative Research Innovation Platform of RIKEN platforms (TRIP) Headquarters, RIKEN, Wako 351-0198, Japan}
\affil[3]{\em \small Perimeter Institute for Theoretical Physics, Waterloo, Ontario N2L 3W8, Canada}
\affil[4]{\em \small 
Fundamental Quantum Science Program (FQSP), RIKEN, Wako 351-0198, Japan}
\date{}

\begin{document}
\maketitle

\begin{abstract}
We determine the relative entropy of entanglement of a bipartite mixed state $\rho_{AB}$ obtained by tracing out one subsystem of a tripartite Haar-random pure state $|\psi\rangle_{ABC}$, finding $E_R(\rho_{AB})=\log\frac{d_Ad_B}{\max(d_A,d_B,d_C)}+O(1)$. 
Equivalently, the relative entropy of entanglement nearly saturates the smaller of the entanglement of formation $E_F(\rho_{AB})$ and the mutual information $I(A:B)$. 
The upper bound is achieved by an explicit separable state obtained through one-sided Schmidt dephasing, which is therefore approximately closest. 
\end{abstract}

\section{Introduction}

Strongly interacting quantum many-body systems are notoriously difficult to study microscopically. 
Fortunately, many questions of physical interest concern universal properties that are insensitive to detailed dynamics. 
A useful approach is therefore to replace a complicated system by an appropriate random ensemble and study typical properties.

Randomness plays a crucial role across modern physics. 
Random matrix theory captures universal spectral statistics from symmetry alone, while the eigenstate thermalization hypothesis describes thermalization through the pseudorandom structure of many-body eigenstates. 
In quantum gravity, Haar-random states and unitaries provide simple models of black-hole dynamics and underlie the black-hole evaporation mechanism. 
Haar-random tensors also form the basic building blocks of random tensor network models of holography. 
In quantum information, random states and channels similarly provide foundation for coding, scrambling, and entanglement.

Despite this broad relevance, the entanglement structure of Haar-random states is fully understood only in limited settings. 
For a bipartite Haar-random pure state, the reduced state of the smaller subsystem is nearly maximally mixed, so the state contains nearly maximal bipartite entanglement. 
The situation becomes considerably richer for a tripartite Haar-random state $\ket{\psi}_{ABC}$. After tracing out $C$, the reduced state $\rho_{AB}=\Tr_C\ket{\psi}\!\bra{\psi}$ is mixed, and different measures of entanglement need not agree.
Substantial progress has been made for several quantities, including the entanglement of formation, logarithmic negativity, and distillable entanglement under different classes of local operations~\cite{Hayden:2005sqo, Horodecki:2009zz, Aubrun:2014threshold, Lu:2020jza, Shapourian:2020mkc, Mori:2024gwe, Li:2025nxv}. 
Nevertheless, a general picture of the mixed-state entanglement of Haar-random states is still lacking.

In this work, we determine the relative entropy of entanglement $\ER(\rho_{AB})$ for Haar-induced mixed states. 
The relative entropy of entanglement quantifies the information-theoretic distinguishability from $\rho_{AB}$ to the set of separable states~\cite{Vedral:1997qn}. 
Despite its conceptual simplicity and importance in the resource theory of entanglement, $\ER$ is generally difficult to evaluate because it requires an optimization over the full set of separable states. 
Indeed, the problem is NP-hard in general~\cite{Huang_2014}, and explicit results are known only for special families of states such as pure bipartite states, Schmidt correlated states, symmetric states, graph states, and some special classes of two-qubit states~\cite{Vedral:1997hd,Audenaert:2002lkg,Vollbrecht:2001vsi,Hajdusek:2012eqs,Markham:2007vcs,Chen:2002ubp,Kim:2010fps}.

Here, we obtain a simple asymptotically exact expression for $\ER(\rho_{AB})$ for random mixed states. In particular, we find
\begin{equation}
  \ER(\rho_{AB})
  =
  \min\bigl\{
    \EF(\rho_{AB}),
    I(A{:}B)_\rho
  \bigr\}
  +O(1),
\end{equation}
showing that one of the two general upper bounds on $\ER$ is nearly
saturated throughout the parameter regime studied here. We also identify
an explicit separable state that nearly attains the optimization defining
$\ER$. This state is obtained by dephasing one subsystem in its Schmidt basis before tracing out the purifying system.
To our knowledge, the relative entropy of entanglement of generic Haar-induced mixed states has not previously been determined.

\subsection{Relative entropy of entanglement}

Let $\Sep(A{:}B)$ denote the set of states separable across $A{:}B$. 
Here, a separable state is a probabilistic mixture of product states, $\sigma_{AB}=   \sum_i p_i \sigma_A^{(i)}\otimes\sigma_B^{(i)}$.
The relative entropy of entanglement measures how far a state is from being separable:
\begin{equation}
  \ER(\rho_{AB})
  \defeq\inf_{\sigma_{AB}\in\Sep(A{:}B)}
  D(\rho_{AB}\Vert\sigma_{AB}),
  \label{eq:ER-definition}
\end{equation}
where $D(\rho\Vert\sigma) \defeq\Tr\!\left[\rho(\log\rho-\log\sigma)\right]$ is the relative entropy.
As an example, for a bipartite pure state $\ket{\phi}_{AB}$ with the Schmidt decomposition $\ket{\phi}_{AB}=\sum_i\sqrt{p_i}\ket{i}_A\otimes\ket{\varphi_i}_B$, it matches with von Neumann entropy:
\begin{equation}
  \ER(\phi_{AB})=S(\phi_A)=-\sum_i p_i\log p_i. 
  \label{eq:ER-pure-example}
\end{equation}
The closest separable state is obtained by replacing entanglement with classical correlation, namely, the classicalized state $\sum_i p_i\dyad{i}_A\otimes\dyad{\varphi_i}_B$ in the Schmidt basis. 

The relative entropy of entanglement is closely related to, but distinct from other entanglement measures. A standard hierarchy of entanglement measures gives~\cite{Devetak:2003zfw,Horodecki:2009zz}
\begin{equation}
  \mathrm{hash}(\rho_{AB})
  \leq\ED(\rho_{AB})
  \leq\ER(\rho_{AB})
  \leq\EF(\rho_{AB}),
  \qquad
  \ER(\rho_{AB})\leq\IAB ,
  \label{eq:measure-hierarchy}
\end{equation}
Here, the entanglement of formation $E_F$, which measures the amount of entanglement needed to prepare the state, is
\begin{equation}
  \EF(\rho_{AB})
  \defeq
  \inf_{\rho_{AB}=\sum_i p_i\ket{\psi_i}\!\bra{\psi_i}}
  \sum_i p_i
  S\!\left(\Tr_B\ket{\psi_i}\!\bra{\psi_i}\right)
  \label{eq:ef-convex-roof}
\end{equation}
and $I(A:B) \defeq S(\rho_A) + S(\rho_B) - S(\rho_{AB})$ denotes the mutual information. 
The distillable entanglement $E_D(\rho_{AB})$ is the optimal asymptotic rate at which maximally entangled states across $A:B$ can be extracted by local operations and classical communication (LOCC). 
The hashing bound is given by
\begin{equation}
  \mathrm{hash}(\rho_{AB})
  \defeq\max\bigl\{0,\,
    S(\rho_A)-S(\rho_{AB}),\,
    S(\rho_B)-S(\rho_{AB})\bigr\}.
  \label{eq:hashing-bound}
\end{equation}
For our purpose, the upper bound will be crucial:
\begin{equation}
  \ER(\rho_{AB})
  \leq\min\bigl\{\EF(\rho_{AB}),\,\IAB\bigr\}. \label{eq:upper_bound}
\end{equation}

\subsection{Main result}

Let $\ket{\psi}_{ABC}$ be drawn from the Haar measure on the unit sphere of
$\mathbb{C}^{d_A}\otimes\mathbb{C}^{d_B}\otimes\mathbb{C}^{d_C}$. Let
\begin{equation}
  \rho_{AB}= \Tr_C\ket{\psi}\!\bra{\psi}_{ABC}
  \label{eq:rho}
\end{equation}
be the reduced state on the bipartition $A{:}B$.
Throughout this paper, we write
\begin{equation}
  \dmin\defeq\min\{d_A,d_B\},\qquad
  \dmax\defeq\max\{d_A,d_B\},\qquad
  D\defeq d_Ad_B=\dmin\dmax. 
  \label{eq:dims}
\end{equation}
The regime of interest in this paper is
\begin{equation}
  d_C\leq D,
  \label{eq:regime}
\end{equation}
since, for $d_C \geq D$, the reduced state $\rho_{AB}$ is close to the maximally mixed state, and hence $E_{R}$ vanishes. 
It will be convenient to define 
\begin{equation}
    \tilde{d} \,\defeq \frac{D}{\max\{d_C,\dmax\}}
    = 
    \begin{cases}
        d_Ad_B/d_C, \qquad &d_C \geq d_+\\
        d_-, \qquad &d_C \leq d_+ 
    \end{cases}
    .
\end{equation}

Our primary goal is to compute the relative entropy of entanglement $E_{R}(\rho_{AB})$ for the reduced state $\rho_{AB}$ derived from a tripartite Haar random state $\ket{\psi}_{ABC}$.

\begin{theorem}[Main result]
\label{thm:main}
Let $\rho_{AB}$ be the reduced state of a tripartite Haar random state $\ket{\psi}_{ABC}$ as in Eq.~\eqref{eq:rho} with $1\leq d_C\leq D$.
There exist absolute constants $K,C,c>0$ such that
\begin{equation}
  \mathbb{P}\!\left[
    \abs{\ER(\rho_{AB})-\log \tilde{d} \ }>K
  \right]
  \leq C\exp\!\left[-c(d_A+d_B+d_C)\right].
  \label{eq:main-tail}
\end{equation}
\end{theorem}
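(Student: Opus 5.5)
We prove the two directions separately: an upper bound $\ER(\rho_{AB})\le\log\tilde d+K$ from an explicit separable state, and a lower bound $\ER(\rho_{AB})\ge\log\tilde d-K$ from a witness-type argument. Each bound is shown to hold with probability at least $1-C e^{-c(d_A+d_B+d_C)}$, and a union bound then gives \eqref{eq:main-tail}.

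\textbf{Upper bound.} Assume without loss of generality $d_A\le d_B$. Write $\ket{\psi}=\sum_i\sqrt{\lambda_i}\ket{i}_A\ket{\varphi_i}_{BC}$ in the Schmidt decomposition across $A{:}BC$. Dephase $A$ in this basis and trace out $C$:
\[
\sigma_{AB}=\sum_i\lambda_i\dyad{i}_A\otimes\Tr_C\dyad{\varphi_i}_{BC},
\]
which is manifestly separable. To bound $D(\rho_{AB}\Vert\sigma_{AB})$ we use the operator inequality
\[
\sigma_{AB}\ge\frac{\lambda_{\min}}{\lambda_{\max}}\cdot\frac{1}{d_A}\,\Tr_C\!\bigl[\mathcal{D}_A(\psi)\bigr],
\]
where $\mathcal{D}_A$ is the pinching onto the Schmidt basis. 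We then compare with $\rho_{AB}$. The relevant random-matrix inputs are these:
\begin{itemize}
\item For $d_A\le d_Bd_C$, the Schmidt coefficients satisfy $\lambda_i\in[c_1/d_A,\,c_2/d_A]$ with exponentially high probability, up to the square-case edge, which needs care (for example, dephasing the smaller factor instead).
\item $\rho_{AB}$ has rank $\min(D,d_C)$, and its nonzero eigenvalues are $\Theta(1/\min(D,d_C))$ on the bulk.
\end{itemize}
Pinching obeys $\mathcal{D}_A(X)\ge X/d_A$. Combined with operator monotonicity of $\log$, this gives $D(\rho\Vert\sigma)\le\log d_A+O(1)=\log\dmin+O(1)$, which covers the case $d_C\le\dmax$.

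In the regime $d_C\ge\dmax$ we use instead the trivial separable state $\rho_A\otimes\rho_B$, which gives $\IAB=\log D-S(\rho_{AB})+\dots$. Here $S(\rho_{AB})\ge\log d_C-O(1)$ and $S(\rho_{A,B})\ge\log d_{A,B}-O(1)$, by Page-type concentration of the spectrum (Marchenko--Pastur edges with Gaussian concentration on the sphere). This yields $\log(D/d_C)+O(1)$. Taking the better of the two bounds gives $\log\tilde d+O(1)$.

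\textbf{Lower bound.} For any $\sigma\in\Sep$ and any projector $P$ we use data processing under the two-outcome measurement $\{P,1-P\}$, in the form
\[
D(\rho\Vert\sigma)\ge-\log\Tr(P\sigma)-h(\Tr P\rho)/\Tr(P\rho).
\]
Take $P=\Pi$, the support projector of $\rho_{AB}$, which is a Haar-random subspace of dimension $r=\min(d_C,D)$. With $\Tr\Pi\rho=1$ this gives $\ER\ge-\log\max_{\text{product }\ket{a}\ket{b}}\bra{ab}\Pi\ket{ab}$, since the maximum of a linear functional over $\Sep$ is attained at pure product states. The key estimate is
\[
\max_{a,b}\bra{ab}\Pi\ket{ab}\le K'\,\max\Bigl\{\frac{d_C}{D},\,\frac{1}{\dmin}\Bigr\}.
\]
To prove it we write $\bra{ab}\Pi\ket{ab}$ via a Gram matrix, use an $\epsilon$-net over $\mathbb{CP}^{d_A-1}\times\mathbb{CP}^{d_B-1}$ of size $e^{O(d_A+d_B)}$, and apply Lévy concentration for each fixed $\ket{ab}$: its overlap with a random $r$-dimensional subspace has mean $r/D$ and fluctuations of order $\sqrt{(d_A+d_B)/D}$, which after the union bound gives $r/D+O\bigl((d_A+d_B)/D\bigr)$.

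The term $(d_A+d_B)/D\approx1/\dmin$ reproduces exactly $\max\{d_C,\dmax\}/D=1/\tilde d$, so we obtain $\ER\ge\log\tilde d-K$. The tail probability from the net argument is $e^{-c(d_A+d_B)}$. The concentration of the spectrum contributes $e^{-c\,d_C}$ where needed, giving the stated rate.

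\textbf{Main obstacle.} We expect the injective-norm bound (the maximal product overlap with a random subspace) to be the hardest step, specifically getting constants uniform over all $d_C\le D$ with the right exponential tail. The spectral edge issues in the upper bound, such as small Schmidt coefficients when $d_A\approx d_Bd_C$, are a secondary concern and are handled by choosing which side to dephase.
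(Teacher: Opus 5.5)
Your argument is correct in outline, but it takes a different route from the paper on both sides.

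\textbf{Upper bound.} You use the same Schmidt-dephased state, but the step that does the work is the pinching inequality $\sigma_{AB}=\Tr_C\mathcal{D}_A(\psi)\ge\rho_{AB}/d_A$. Combined with operator monotonicity of $\log$, it gives $D(\rho_{AB}\Vert\sigma_{AB})\le\log\dmin$ \emph{deterministically}; this is a $D_{\max}$ bound. Your displayed inequality with $\lambda_{\min}/\lambda_{\max}$ is vacuous, because $\Tr_C\mathcal{D}_A(\psi)$ is $\sigma_{AB}$ itself, and the random-matrix bullets are not needed for this step. For $d_C\ge\dmax$ you switch to $\rho_A\otimes\rho_B$. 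The paper instead computes $D(\rho_{AB}\Vert\sigma^{A\to\mathrm{cl}}_{AB})=-S(\rho_{AB})+S(\rho_A)+\sum_kq_kS(\beta_k)$ exactly and uses $S(\beta_k)\le\log\min\{d_B,d_C\}$. That way one state plus entropy concentration covers both regimes, which is what supports the paper's claim that the one-sided classicalization is nearly closest throughout. Your route proves the theorem but not that claim.

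\textbf{Lower bound.} You use $D(\rho\Vert\sigma)\ge-\log\Tr(\Pi_\rho\sigma)$, the fact that $\supp\rho_{AB}$ is a Haar-random $d_C$-dimensional subspace, and a net on $A,B$ only. The paper uses Jensen, $\Tr(\rho\log\sigma)\le\log\Tr(\rho\sigma)$, to get $-S(\rho_{AB})-\log\Lambda(\rho_{AB})$, and bounds the injective norm of $\psi$ with a net on $A,B,C$. Your version avoids the entropy term entirely; the paper's gives the geometric measure $G(\psi)$ as a byproduct.

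\textbf{Steps in your key estimate that need care.} The estimate itself is right, but three steps have to be done as follows for it to hold with only an $O(1)$ loss.
\begin{enumerate}
\item Apply L\'evy's lemma to the $1$-Lipschitz amplitude $x\mapsto\norm{P_rx}$. Its mean is at most $\sqrt{r/D}$, and it exceeds this by $t/\sqrt{D}$ with probability at most $e^{-ct^2}$. Squaring gives $\bra{ab}\Pi\ket{ab}\le2(r+t^2)/D$. If you instead apply it to the overlap $\bra{ab}\Pi\ket{ab}$ itself, the union bound forces deviations of order $\sqrt{(d_A+d_B)/D}\approx\dmin^{-1/2}$. That yields only $\ER\ge\frac12\log\dmin$ when $d_C\le\dmax$. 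Your sentence about ``fluctuations of order $\sqrt{(d_A+d_B)/D}$'' is consistent with your conclusion only in the amplitude reading.
\item Passing from the net to all product states needs the self-bounding argument based on bilinearity of $(a,b)\mapsto\Pi\ket{a}\ket{b}$. Write $\ket{ab}=\ket{a_0b_0}+\ket{\delta a,b}+\ket{a_0,\delta b}$ and set $M=\max_{a,b}\norm{\Pi\ket{ab}}$. Then $M\le M_{\mathrm{net}}+2\epsilon M$, so a constant $\epsilon$ suffices. A purely additive comparison would need $\epsilon\sim\tilde d^{-1/2}$ and lose a factor $\log\tilde d$ in the overlap bound.
\item Take $t\propto\sqrt{d_A+d_B+d_C}$. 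The extra $d_C/D$ is already of order $r/D$, and this gives failure probability $e^{-c(d_A+d_B+d_C)}$ directly. Spectral concentration plays no role in this bound, and no additional event can shrink a failure probability of order $e^{-c(d_A+d_B)}$.
\end{enumerate}

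\textbf{Minor point.} The general two-outcome inequality should read $D(\rho\Vert\sigma)\ge-\Tr(P\rho)\log\Tr(P\sigma)-h(\Tr P\rho)$. Your version is false for $\Tr P\rho<1$, but it coincides with the correct one at $\Tr P\rho=1$, which is the only case you use.
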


In other words, we essentially find $E_{R} = \log \tilde{d} + O(1)$ where $O(1)$ denotes an absolute additive constant, independent of $d_A$, $d_B$ and $d_C$.
This statement holds \emph{with high probability} over the Haar measure. 
For Haar random states, the following results are well known~\cite{Hayden:2005sqo,Page:1993df,PhysRevLett.77.1}:
\begin{align}
E_{F}(\rho_{AB}) = \log d_- + O(1), \qquad I(A:B) = \log\frac{d_Ad_B}{d_C} + O(1).
\end{align}
Hence, we find that $E_{R}$ nearly saturates the upper bound Eq.~\eqref{eq:upper_bound}. 
 
\section{Bound on upper tail: one-sided classicalization}

We first bound the upper tails of the relative entropy of entanglement. 
We will show that, for some absolute constants $K,C,c>0$, 
\begin{equation}
  \mathbb{P}\!\left[
    \ER(\rho_{AB}) > \log \tilde{d} + K
  \right]
  \leq C\exp\!\left[-c(d_Ad_B+d_Ad_C+d_Bd_C)\right].
  \label{eq:upper-tail}
\end{equation}
This is stronger than the probability bound required in Theorem~\ref{thm:main}.

Our basic strategy is to construct an explicit separable state $\sigma_{AB}$ such that
\begin{equation}
D(\rho_{AB}\Vert\sigma_{AB}) = \log\tilde{d} +O(1)
\end{equation}
with high probability. 
Since $E_{R}(\rho_{AB})\leq D(\rho_{AB}\Vert\sigma_{AB})$, this immediately gives the desired upper bound.

Exchanging $A$ and $B$ if necessary, assume that
$d_A=d_-$ and $d_B=d_+$. Let the Schmidt decomposition of $\ket{\psi}$ across $A:BC$ be
\begin{equation}
  \ket{\psi}_{ABC}
  =
  \sum_{k=1}^{r_A}
  \sqrt{q_k}\,
  \ket{u_k}_A \ket{v_k}_{BC},
  \qquad
  \beta_k^B
  \defeq
  \Tr_C \ket{v_k}\!\bra{v_k}.
  \label{eq:schmidt-classicalization}
\end{equation}
Here $r_A=\rank\rho_A$. Dephasing $A$ in the Schmidt basis
$\{\ket{u_k}\}$ and then tracing out $C$ gives the separable state
\begin{equation}
  \sigma^{A\to\mathrm{cl}}_{AB}
  \,\defeq\,
  \sum_{k=1}^{r_A}
  q_k
  \ket{u_k}\!\bra{u_k}_A
  \otimes
  \beta_k^B,
  \label{eq:sigma-classicalized}
\end{equation}
which we call the one-sided classicalization of $\rho_{AB}$.

\begin{lemma}[One-sided classicalization]
\label{lem:one-sided}
The support of $\rho_{AB}$ is contained in that of one-sided classicalization $\sigma^{A\to\mathrm{cl}}_{AB}$, and
\begin{equation}
  D\!\left(
    \rho_{AB}
    \middle\Vert
    \sigma^{A\to\mathrm{cl}}_{AB}
  \right)
  =
  -S(\rho_{AB})
  +S(\rho_A)
  +\sum_k q_k S(\beta_k).
  \label{eq:exact-dcl}
\end{equation}
\end{lemma}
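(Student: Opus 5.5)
The plan is to use the block structure that the Schmidt basis of $A$ induces on $\sigma^{A\to\mathrm{cl}}_{AB}$, and then compute the two terms of $D(\rho\Vert\sigma)=-S(\rho_{AB})-\Tr[\rho_{AB}\log\sigma^{A\to\mathrm{cl}}_{AB}]$ separately. Write $P_k\defeq\ket{u_k}\!\bra{u_k}_A$. Since $\sigma^{A\to\mathrm{cl}}_{AB}=\sum_k q_k P_k\otimes\beta_k^B$ is a direct sum of the blocks $q_k\beta_k^B$ on the mutually orthogonal subspaces $\ket{u_k}\otimes\mathcal{H}_B$, the functional calculus acts blockwise:
\begin{equation}
  \log\sigma^{A\to\mathrm{cl}}_{AB}=\sum_k P_k\otimes\bigl(\log q_k\,\Pi_k+\log\beta_k^B\bigr),
\end{equation}
understood on the support. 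Here $\Pi_k$ is the projector onto $\supp\beta_k^B$, and $\log\beta_k^B$ is taken on that support.

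\textbf{Support.} Every vector in the range of $\rho_{AB}$ has the form $(\mathbb{1}_{AB}\otimes\bra{c}_C)\ket{\psi}=\sum_k\sqrt{q_k}\ket{u_k}_A\otimes(\mathbb{1}_B\otimes\bra{c})\ket{v_k}$. The vector $(\mathbb{1}_B\otimes\bra{c})\ket{v_k}$ lies in the range of $\Tr_C\ket{v_k}\!\bra{v_k}=\beta_k^B$, by the standard fact that the range of a reduced pure state is spanned by such partial contractions. Hence each vector in the range of $\rho_{AB}$ lies in $\bigoplus_k \ket{u_k}\otimes\supp\beta_k^B=\supp\sigma^{A\to\mathrm{cl}}_{AB}$, where $k$ runs over $q_k>0$. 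Since both states are positive, range containment is the same as support containment, so $D(\rho_{AB}\Vert\sigma^{A\to\mathrm{cl}}_{AB})$ is finite and the logarithm above may be used.

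\textbf{Cross term.} We need $\Tr[\rho_{AB}(P_k\otimes X_B)]$ for $X_B=\log q_k\,\Pi_k+\log\beta_k^B$. Writing $\rho_{AB}=\Tr_C\ket{\psi}\!\bra{\psi}$ gives
\begin{equation}
  \Tr[\rho_{AB}(P_k\otimes X_B)]=\bra{\psi}P_k\otimes X_B\otimes\mathbb{1}_C\ket{\psi}=q_k\bra{v_k}X_B\otimes\mathbb{1}_C\ket{v_k}=q_k\Tr[\beta_k^B X_B].
\end{equation}
Since $\Tr[\beta_k^B\Pi_k]=1$ and $\Tr[\beta_k^B\log\beta_k^B]=-S(\beta_k)$, summing over $k$ yields
\begin{equation}
  -\Tr[\rho_{AB}\log\sigma^{A\to\mathrm{cl}}_{AB}]=-\sum_k q_k\log q_k+\sum_k q_k S(\beta_k)=S(\rho_A)+\sum_kq_kS(\beta_k),
\end{equation}
because the spectrum of $\rho_A$ is $\{q_k\}$. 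Combining this with $-S(\rho_{AB})$ gives Eq.~\eqref{eq:exact-dcl}.

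The only real subtlety is the support bookkeeping: $\log\sigma^{A\to\mathrm{cl}}_{AB}$ must be restricted to its support, and one must check that $\rho_{AB}$ places no weight off that support, which is exactly the containment established above. Everything else is routine blockwise calculation.
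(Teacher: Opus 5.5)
Your proof is correct and takes essentially the same route as the paper. Support containment holds because the $C$-components of each $\ket{v_k}$ lie in $\supp\beta_k$; the logarithm of $\sigma^{A\to\mathrm{cl}}_{AB}$ is then block diagonal, so only the diagonal blocks $q_k\beta_k$ of $\rho_{AB}$ contribute to the cross term. Using partial contractions $(\mathbf{1}\otimes\bra{c})\ket{\psi}$ instead of the paper's expansion into $\ket{\xi_{k\alpha}}$ is cosmetic, and writing $\Pi_k$ instead of $\mathbf{1}$ in the blockwise logarithm is a slightly more careful version of the same bookkeeping.
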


Note that Lemma~\ref{lem:one-sided} is completely general and does not rely on the Haar randomness. 

\begin{proof}
Choose an orthonormal basis $\{\ket{\alpha}_C\}$ and expand each Schmidt
vector as
\begin{equation}
  \ket{v_k}_{BC}
  =
  \sum_\alpha
  \ket{\xi_{k\alpha}}_B\ket{\alpha}_C,
  \qquad
  \beta_k
  =
  \sum_\alpha
  \ket{\xi_{k\alpha}}\!\bra{\xi_{k\alpha}}
  \label{eq:vk-components}
\end{equation}
with unnormalized vectors $\ket{\xi_{k\alpha}}$.
Tracing Eq.~\eqref{eq:schmidt-classicalization} over $C$ gives
\begin{equation}
  \rho_{AB}
  =
  \sum_{k,\ell,\alpha}
  \sqrt{q_kq_\ell}\,
  \ket{u_k}\!\bra{u_\ell}_A
  \otimes
  \ket{\xi_{k\alpha}}\!\bra{\xi_{\ell\alpha}}_B.
  \label{eq:rho-classicalization-blocks}
\end{equation}
From this expression, every vector in the support of $\rho_{AB}$ belongs
to the direct sum
\begin{equation}
  \bigoplus_k
  \ket{u_k}_A\otimes\supp\beta_k
  =
  \supp\sigma^{A\to\mathrm{cl}}_{AB},
\end{equation}
proving the support containment.

Since $\sigma^{A\to\mathrm{cl}}_{AB}$ is block diagonal with respect to
these orthogonal Schmidt sectors,
\begin{equation}
  \log\sigma^{A\to\mathrm{cl}}_{AB}
  =
  \bigoplus_k
  \ket{u_k}\!\bra{u_k}_A
  \otimes
  \bigl[(\log q_k) \mathbf{1}+\log\beta_k\bigr]_B.
  \label{eq:log-sigma-classicalized}
\end{equation}
Consequently, only the diagonal blocks ($k=\ell$) of
Eq.~\eqref{eq:rho-classicalization-blocks} contribute to
$\Tr(\rho_{AB}\log\sigma^{A\to\mathrm{cl}}_{AB})$. Since the $k$th
diagonal block of $\rho_{AB}$ is $q_k\beta_k$, we obtain
\begin{align}
  \Tr\!\left(
    \rho_{AB}\log\sigma^{A\to\mathrm{cl}}_{AB}
  \right)
  &=
  \sum_k q_k\log q_k
  +
  \sum_k q_k\Tr(\beta_k\log\beta_k)
  \nonumber\\
  &=
  -S(\rho_A)-\sum_k q_kS(\beta_k).
\end{align}
Combining this with
$\Tr(\rho_{AB}\log\rho_{AB})=-S(\rho_{AB})$ proves Eq.~\eqref{eq:exact-dcl}.
\end{proof}

For Haar random states, since
\begin{equation}
  S(\rho_A)\leq\log d_A,
  \qquad
  S(\beta_k)
  \leq
  \log\rank\beta_k
  \leq
  \log\min\{d_B,d_C\},
\end{equation}
Lemma~\ref{lem:one-sided} gives
\begin{equation}
  D\!\left(
    \rho_{AB}
    \middle\Vert
    \sigma^{A\to\mathrm{cl}}_{AB}
  \right)
  \leq
  -S(\rho_{AB})
  +\log d_A
  +\log\min\{d_B,d_C\}.
  \label{eq:dcl-dimensional-bound}
\end{equation}
A standard result for Haar random state is that $S(\rho_{AB}) = \log d_C + O(1)$ for $d_C \leq D$. 
Specifically, the entropy-concentration bound of Hayden, Leung, and Winter~\cite{Hayden:2005sqo} implies that, after appropriately adjusting the absolute constants and noting that $d_A^{-1}+d_B^{-1}+d_C^{-1}\lesssim (\log d_C)^{-2}$,
\begin{equation}
  \mathbb{P}\!\left[
    S(\rho_{AB})<\log d_C-K
  \right]
  \leq
  C\exp\!\left[
    -c\bigl(d_A d_B+d_A d_C+d_B d_C\bigr)
  \right].
  \label{eq:entropy-lower-tail}
\end{equation}
Together with Eq.~\eqref{eq:dcl-dimensional-bound}
this proves
Eq.~\eqref{eq:upper-tail}.

\section{Bound on lower tail: overlap with product states}

We next bound the lower tail of the relative entropy of entanglement. We will show that, for some absolute constants $K,C,c>0$,
\begin{equation}
  \mathbb{P}\!\left[
    \ER(\rho_{AB})<\log\widetilde d-K
  \right]
  \leq
  C\exp\!\left[-c(d_A+d_B+d_C)\right].
  \label{eq:lower-tail}
\end{equation}
Together with the upper-tail bound in Eq.~\eqref{eq:upper-tail}, this completes the proof of Theorem~\ref{thm:main}.

Our strategy is to lower-bound $\ER(\rho_{AB})$ using its maximal overlap with a product state. Define
\begin{equation}
  \Lambda(\rho_{AB})
  \defeq
  \max_{\substack{\|a\|=\|b\|=1}}
  \bra{a,b}\rho_{AB}\ket{a,b}.
  \label{eq:maximal-product-overlap}
\end{equation}
Since $\rho_{AB}=\Tr_C\ket{\psi}\!\bra{\psi}$, this can equivalently be written as the maximal overlap for the purified state $|\psi\rangle_{ABC}$:
\begin{equation}
  \Lambda(\rho_{AB})
  =
  \max_{\substack{\|a\|=\|b\|=\|c\|=1}}
  \left|\langle a,b,c|\psi\rangle\right|^2.
  \label{eq:purified-product-overlap}
\end{equation}


\begin{lemma}[Product-overlap bound]
\label{lem:product-overlap-bound}
For any bipartite state $\rho_{AB}$,
\begin{equation}
  \ER(\rho_{AB})
  \geq
  -S(\rho_{AB})-\log\Lambda(\rho_{AB}).
  \label{eq:ER-product-overlap-bound}
\end{equation}
\end{lemma}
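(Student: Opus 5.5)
The plan is to fix an arbitrary separable $\sigma_{AB}\in\Sep(A{:}B)$ and bound $D(\rho_{AB}\Vert\sigma_{AB})$ from below by a quantity that does not depend on $\sigma$. First write
\begin{equation}
  D(\rho_{AB}\Vert\sigma_{AB})=-S(\rho_{AB})-\Tr(\rho_{AB}\log\sigma_{AB}).
\end{equation}
The whole problem then reduces to showing that
\begin{equation}
  -\Tr(\rho_{AB}\log\sigma_{AB})\geq-\log\Lambda(\rho_{AB})
\end{equation}
for every separable $\sigma$. If $\supp\rho\not\subseteq\supp\sigma$, the relative entropy is $+\infty$ and there is nothing to prove, so assume the support condition holds.

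To handle the logarithm, I would use the operator concavity of $\log$ in the form of Jensen's inequality. Expand $\rho_{AB}=\sum_j r_j\ket{e_j}\!\bra{e_j}$ in its eigenbasis. For each unit vector $e_j$ in the support of $\sigma$, concavity of $\log$ applied to the spectral measure of $\sigma$ in the state $e_j$ gives
\begin{equation}
  \bra{e_j}\log\sigma\ket{e_j}\leq\log\bra{e_j}\sigma\ket{e_j}.
\end{equation}
Averaging over $j$ with weights $r_j$ and applying concavity once more yields
\begin{equation}
  \Tr(\rho\log\sigma)\leq\sum_j r_j\log\bra{e_j}\sigma\ket{e_j}\leq\log\Tr(\rho\sigma).
\end{equation}
The support assumption is exactly what guarantees $\bra{e_j}\sigma\ket{e_j}>0$ for every $e_j$ with $r_j>0$, so each step is well defined. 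This gives $-\Tr(\rho\log\sigma)\geq-\log\Tr(\rho\sigma)$.

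It then remains to bound $\Tr(\rho\sigma)$ uniformly over $\Sep(A{:}B)$. Since $\sigma\mapsto\Tr(\rho\sigma)$ is linear and $\Sep$ is the convex hull of pure product states $\ket{a,b}\!\bra{a,b}$, its maximum over $\Sep$ is attained at an extreme point. Explicitly, decompose each factor $\sigma_A^{(i)}\otimes\sigma_B^{(i)}$ into pure product states; each term contributes at most $\bra{a,b}\rho\ket{a,b}\leq\Lambda(\rho)$. Hence $\Tr(\rho\sigma)\leq\Lambda(\rho_{AB})$, and therefore
\begin{equation}
  D(\rho\Vert\sigma)\geq-S(\rho)-\log\Lambda(\rho).
\end{equation}
Taking the infimum over $\sigma$ gives Eq.~\eqref{eq:ER-product-overlap-bound}.

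The only delicate step is the Jensen argument: its scalar version requires care with the support of $\sigma$, and in particular with zero eigenvalues of $\sigma$ that are invisible to $\rho$. The cleanest route is to restrict everything to $\supp\sigma$, where $\log\sigma$ is a bounded Hermitian operator, and apply the inequality there. Everything else in the argument is linear and elementary.
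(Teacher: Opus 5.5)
Your proposal is correct and follows the paper's proof essentially step for step. You reduce to the support-containment case, apply the Jensen-type bound $\Tr(\rho\log\sigma)\leq\log\Tr(\rho\sigma)$, and then bound $\Tr(\rho\sigma)\leq\Lambda(\rho)$ by decomposing $\sigma$ into pure product states. Your eigenvector-by-eigenvector justification of the Jensen step, with the support condition making each spectral measure a probability measure, spells out what the paper simply calls concavity of the logarithm.
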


\begin{proof}
Let $\sigma_{AB}$ be an arbitrary separable state. If
$\supp\rho_{AB}\nsubseteq\supp\sigma_{AB}$, then
$D(\rho_{AB}\Vert\sigma_{AB})=+\infty$, and the claim is immediate.
Otherwise, concavity of the logarithm gives
\begin{equation}
  \Tr\!\left(\rho_{AB}\log\sigma_{AB}\right)
  \leq
  \log\Tr\!\left(\rho_{AB}\sigma_{AB}\right).
  \label{eq:log-jensen}
\end{equation}
Writing
\begin{equation}
  \sigma_{AB}
  =
  \sum_i p_i
  \ket{a_i,b_i}\!\bra{a_i,b_i},
\end{equation}
we have
\begin{equation}
  \Tr\!\left(\rho_{AB}\sigma_{AB}\right)
  =
  \sum_i p_i
  \bra{a_i,b_i}\rho_{AB}\ket{a_i,b_i}
  \leq
  \Lambda(\rho_{AB}).
\end{equation}
Therefore,
\begin{align}
  D(\rho_{AB}\Vert\sigma_{AB})
  &=
  -S(\rho_{AB})
  -
  \Tr\!\left(\rho_{AB}\log\sigma_{AB}\right)
  \nonumber\\
  &\geq
  -S(\rho_{AB})
  -
  \log\Lambda(\rho_{AB}).
\end{align}
Taking the infimum over all separable $\sigma_{AB}$ proves the claim.
\end{proof}

Thus, it remains to upper-bound the maximal product overlap $\Lambda(\rho_{AB})$.

Bounds of this type are known in the mathematical literature~\cite{TomiokaSuzuki}, typically formulated for Gaussian tensors.
For completeness, we state the result directly in quantum-mechanical language and give a short proof using an epsilon net. 
Our bound also improves upon a previous result in the quantum information  literature~\cite{ZhuChenHayashi} by removing the logarithmic factor. 

\begin{lemma}[Maximal product overlap]
\label{lem:haar-product-overlap}
Let $\ket{\psi}_{ABC}$ be Haar random and let
$\rho_{AB}=\Tr_C\ket{\psi}\!\bra{\psi}$. 
There exist absolute constants
$C,c>0$ such that
\begin{equation}
  \mathbb{P}\!\left[
    \Lambda(\rho_{AB})
    >
    C\frac{d_A+d_B+d_C}{d_A d_B d_C}
  \right]
  \leq
  C\exp\!\left[-c(d_A+d_B+d_C)\right].
  \label{eq:haar-product-overlap}
\end{equation}
\end{lemma}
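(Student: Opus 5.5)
We reduce to a Gaussian tensor and use an $\epsilon$-net with a union bound. Write $\ket{\psi}=\ket{g}/\|g\|$, where $g\in\mathbb{C}^{d_A}\otimes\mathbb{C}^{d_B}\otimes\mathbb{C}^{d_C}$ has i.i.d.\ standard complex Gaussian entries, so that $\mathbb{E}\|g\|^2=N\defeq d_Ad_Bd_C$. By Eq.~\eqref{eq:purified-product-overlap},
\begin{equation}
  \Lambda(\rho_{AB})=\frac{T(g)^2}{\|g\|^2},\qquad T(g)\defeq\max_{\|a\|=\|b\|=\|c\|=1}\abs{\langle a,b,c|g\rangle},
\end{equation}
and $T(g)$ is the injective norm of $g$. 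Standard $\chi^2$ concentration gives $\|g\|^2\geq N/2$ except with probability $e^{-cN}$, which is far smaller than $e^{-c(d_A+d_B+d_C)}$. It therefore suffices to show that $T(g)\leq C'\sqrt{d_A+d_B+d_C}$ except with probability $C e^{-c(d_A+d_B+d_C)}$.

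For fixed unit vectors $a,b,c$, the scalar $\langle a,b,c|g\rangle$ is a standard complex Gaussian. Hence $\mathbb{P}[\abs{\langle a,b,c|g\rangle}>t]=e^{-t^2}$ (with the normalization $\mathbb{E}\abs{z}^2=1$). Take $\epsilon$-nets $\mathcal{N}_A,\mathcal{N}_B,\mathcal{N}_C$ of the complex unit spheres with $\epsilon=1/8$. Since the complex sphere in $\mathbb{C}^d$ is the real sphere in $\mathbb{R}^{2d}$, each net has size at most $(1+2/\epsilon)^{2d}=17^{2d}$. A union bound over the product net gives
\begin{equation}
  \mathbb{P}\Bigl[\max_{\mathcal{N}_A\times\mathcal{N}_B\times\mathcal{N}_C}\abs{\langle a,b,c|g\rangle}>t\Bigr]\leq 17^{2(d_A+d_B+d_C)}e^{-t^2}.
\end{equation}
Choosing $t^2=K(d_A+d_B+d_C)$ with $K$ large (for example $K=2\cdot 2\log 17+1$) makes this at most $e^{-(d_A+d_B+d_C)}$.

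The step needing care is passing from the net to the full sphere. This uses the usual trilinear approximation argument. Let $(a^*,b^*,c^*)$ attain $T$, and pick net points with $\|a^*-a\|,\|b^*-b\|,\|c^*-c\|\leq\epsilon$. Expanding by multilinearity,
\begin{equation}
  \langle a^*,b^*,c^*|g\rangle-\langle a,b,c|g\rangle
\end{equation}
is a sum of three terms, each of which has one slot equal to a difference vector of norm at most $\epsilon$. By homogeneity each term is bounded by $\epsilon T$. This gives $T\leq\max_{\mathrm{net}}+3\epsilon T$, so $T\leq\frac{1}{1-3\epsilon}\max_{\mathrm{net}}\leq 2\max_{\mathrm{net}}$.

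Combining the three steps: with probability at least $1-e^{-(d_A+d_B+d_C)}-e^{-cN}$ we have
\begin{equation}
  \Lambda(\rho_{AB})\leq\frac{4K(d_A+d_B+d_C)}{N/2},
\end{equation}
which is Eq.~\eqref{eq:haar-product-overlap} after adjusting the constants. The only potential obstacle is keeping track of the complex-net cardinality and the Gaussian normalization. There is no real difficulty, since the net entropy $O(d_A+d_B+d_C)$ is exactly matched by the Gaussian tail, and this is why no logarithmic factor appears.
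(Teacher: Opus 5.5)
Your proof is correct and follows essentially the same route as the paper: a sub-Gaussian tail for a fixed product overlap, a union bound over a product $\epsilon$-net with $\epsilon=1/8$, and a multilinear net-to-sphere comparison. The differences are cosmetic. You derive the fixed-overlap tail by passing to a Gaussian tensor and controlling $\|g\|^2$ separately, whereas the paper asserts the tail for the Haar state directly. Your three-term telescoping also gives the factor $1/(1-3\epsilon)$, where the paper expands into seven terms with error $3\epsilon+3\epsilon^2+\epsilon^3$.
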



\begin{proof}
For any fixed product state $\ket{a,b,c}$, its overlap with a Haar-random state obeys the sub-Gaussian tail bound 
\begin{equation}
  \mathbb{P}\!\left[
    \left|\langle a,b,c|\psi\rangle\right|
    >
    \frac{t}{\sqrt{d_A d_Bd_C}}
  \right]
  \leq
  e^{-c_0 t^2}
  \label{eq:fixed-product-tail}
\end{equation}
for some $c_0>0$.

An $\epsilon$-net is a finite set of states such that every state lies within distance $\epsilon$ of some state in the set.
The unit sphere in $\mathbb{C}^{d_X}$ admits such a net containing at most~$\left(\frac{C_0}{\epsilon}\right)^{2d_X}$ states for some $C_0>0$~\cite{Hayden:2004mfu}. Thus, for any fixed constant $\epsilon>0$, the set of product
states admits an epsilon net of size
\begin{equation}
  \exp\!\left[C_1(d_A+d_B+d_C)\right]
\end{equation}
for some $C_1>0$.
Let $\mathcal{N}_X$ denote the local epsilon nets. 
Define the maximal overlap on the product net by
\begin{equation}
  \Lambda_{\mathrm{net}}(\rho_{AB})
  \defeq
  \max_{\substack{
    a\in\mathcal{N}_A,
    b\in\mathcal{N}_B,
    c\in\mathcal{N}_C
  }}
  |\langle a,b,c|\psi\rangle|^2.
\end{equation}
Applying Eq.~\eqref{eq:fixed-product-tail} to every state in the product
net and taking a union bound gives
\begin{equation}
  \mathbb{P}\!\left[
    \Lambda_{\mathrm{net}}(\rho_{AB})
    >
    \frac{t^2}{d_A d_B d_C}
  \right]
  \leq
  \exp\!\left[
    C_1(d_A+d_B+d_C)-c_0t^2
  \right].
\end{equation}
Choosing
\begin{equation}
  t=C_2\sqrt{d_A+d_B+d_C},
\end{equation}
with $C_2$ sufficiently large, we obtain
\begin{equation}
  \Lambda_{\mathrm{net}}(\rho_{AB})
  \leq
  C_2^2
  \frac{d_A+d_B+d_C}{d_A d_B d_C}
  \label{eq:net-overlap-bound}
\end{equation}
except with probability at most
$C\exp\!\left[-c(d_A+d_B+d_C)\right]$.

It remains to compare the maximum over the net with the maximum over all
product states. Choose unit vectors $a,b,c$ attaining the maximum
$\Lambda(\rho_{AB})$, and choose
$a_0\in\mathcal{N}_A$, $b_0\in\mathcal{N}_B$, and
$c_0\in\mathcal{N}_C$ such that
\begin{equation}
  \|a-a_0\|,
  \|b-b_0\|,
  \|c-c_0\|
  \leq
  \epsilon.
\end{equation}
Writing
\begin{equation}
  a=a_0+\delta a,
  \qquad
  b=b_0+\delta b,
  \qquad
  c=c_0+\delta c,
\end{equation}
and expanding the trilinear overlap, the three first-order, three
second-order, and one third-order error terms are bounded in total by
\begin{equation}
  \left(3\epsilon+3\epsilon^2+\epsilon^3\right)
  \sqrt{\Lambda(\rho_{AB})}.
\end{equation}
Therefore,
\begin{equation}
  \sqrt{\Lambda(\rho_{AB})}
  \leq
  \sqrt{\Lambda_{\mathrm{net}}(\rho_{AB})}
  +
  \left(3\epsilon+3\epsilon^2+\epsilon^3\right)
  \sqrt{\Lambda(\rho_{AB})}.
\end{equation}
Taking $\epsilon=1/8$, we have $3\epsilon+3\epsilon^2+\epsilon^3  =
  \frac{217}{512}  <
  \frac{1}{2}$, and hence
\begin{equation}
  \Lambda(\rho_{AB})
  \leq
  4\Lambda_{\mathrm{net}}(\rho_{AB}).
\end{equation}
Combining this with Eq.~\eqref{eq:net-overlap-bound} and absorbing the
constant factors into $C$, we conclude that
\begin{equation}
  \Lambda(\rho_{AB})
  \leq
  C\frac{d_A+d_B+d_C}{d_A d_B d_C}
\end{equation}
except with probability at most $C\exp\!\left[-c(d_A+d_B+d_C)\right]$.
This proves the claim.
\end{proof}

We now combine Lemmas~\ref{lem:product-overlap-bound} and
\ref{lem:haar-product-overlap}. 
Since $\rank\rho_{AB}\leq d_C$, we have $S(\rho_{AB})\leq\log d_C$. 
Therefore, outside an event of probability at most $C\exp[-c(d_A+d_B+d_C)]$,
\begin{align}
  \ER(\rho_{AB})
  &\geq
  -S(\rho_{AB})-\log\Lambda(\rho_{AB})
  \nonumber\\
  &\geq
  \log\frac{d_A d_B}{d_A+d_B+d_C}
  -O(1) \nonumber \\
  &= \log\tilde d-O(1).\nonumber
\end{align}
This proves Eq.~\eqref{eq:lower-tail} and completes the proof of
Theorem~\ref{thm:main}.

\section{Discussion}
\label{sec:discussion}

We have determined the relative entropy of entanglement of a Haar-induced mixed state up to an absolute additive constant, with the one-sided classicalized state $\sigma^{A\to\mathrm{cl}}_{AB}$ being nearly the closest.
Because $\sigma^{A\to\mathrm{cl}}_{AB}$ is obtained by dephasing $A$ in the Schmidt basis, its relative entropy also admits the interpretation
\begin{equation}
  D\!\left(
    \rho_{AB}
    \middle\Vert
    \sigma^{A\to\mathrm{cl}}_{AB}
  \right)
  =
  S\!\left(\sigma^{A\to\mathrm{cl}}_{AB}\right)
  -S(\rho_{AB})
\end{equation}
as an entropy drop due to Schmidt dephasing.
Equivalently, since the dephasing preserves both marginals, this is the mutual information lost by measuring the Schmidt label of $A$ and discarding the measurement outcome, also known as the quantum discord~\cite{Henderson:2001wrr,PhysRevLett.88.017901,Zurek:2011vew,Mori:2025gqe}.

Our result also reveals a macroscopic separation between the relative entropy of entanglement and distillable entanglement. In the regime
$d_A,d_B\ll d_C$, the one-shot, one-way distillable entanglement vanishes asymptotically~\cite{ Hayden:2005sqo,Li:2025nxv}, whereas $\ER(\rho_{AB})  =
  \log\frac{d_A d_B}{d_C}
  +O(1)$ 
remains macroscopically large. More generally, the one-way distillable entanglement saturates the hashing bound~\cite{Mori:2024gwe}, while
Theorem~\ref{thm:main} shows that $\ER(\rho_{AB})$ can exceed this bound by a macroscopic amount. Haar-random states therefore contain extensive entanglement that is inaccessible to these distillation protocols.

We also note that, as a corollary to our results, we obtain the large-$d$ asymptotic formula for the geometric measure of tripartite entanglement~\cite{Wei:2003qfk,Weinbrenner:2025uwb}:
\begin{equation}
    G(\psi)\defeq -\log \max_{\substack{\|a\|=\|b\|=\|c\|=1}}
  \left|\langle a,b,c|\psi\rangle\right|^2 = -\log \Lambda(\rho_{AB})= \log (d_- \, d_C) +O(1),\quad d_C\le D
\end{equation}
as implied in an earlier mathematical literature~\cite{Dartois:2024zuc}.
This is related to the maximum number of states that can be discriminated under LOCC~\cite{Hayashi:2006ppb}.

Several questions remain open. Most directly, our theorem concerns the single-copy quantity $\ER$, while the generalized quantum Stein's lemma~\cite{Brandao:2010iez} involves its regularized version
$
  \ER^\infty(\rho)
  \defeq
  \lim_{n\to\infty}
  \frac{1}{n}\ER\!\left(\rho^{\otimes n}\right)$.
Operationally, $\ER^\infty$ determines the optimal asymptotic error exponent for distinguishing many copies of $\rho$ from the set of separable states. 
It would therefore be interesting to determine whether regularization changes the leading Haar-random answer. 

Another question is how broadly the one-sided classicalization remains near optimal. 
Writing the reduced state as $\rho_A\propto e^{-K_A}$, where $K_A$ is the modular/entanglement Hamiltonian, the Schmidt basis is precisely the eigenbasis of $K_A$. 
One-sided classicalization can therefore be viewed as dephasing $\rho_{AB}$ in the modular-energy basis of subsystem $A$. 
The same construction gives a general upper bound on $\ER$ for any tripartite pure state, but there is no reason for it to be tight in a structured many-body system.
In this setting, Schmidt dephasing may be viewed as dephasing in the eigenbasis of the modular Hamiltonian. Understanding the resulting quantity in quantum many-body systems, conformal field theories, and random tensor networks may be interesting future problems. 

The original motivation for this work came from the possible holographic interpretation of the relative entropy of entanglement in the AdS/CFT correspondence. 
In holographic states, we find that $\ER$ can deviate from both $\EF$ and the mutual information and is instead described by a new minimal-surface prescription.

\subsection*{Acknowledgment}
We thank Satoya Imai and Ryuji Takagi for helpful comments made during the YITP workshop YITP-W-26-07, held at the Yukawa Institute for Theoretical Physics, Kyoto University.
Research at Perimeter Institute is supported in part by the Government of Canada through the Department of Innovation, Science and Economic Development and by the Province of Ontario through the Ministry of Colleges and Universities. 
This work was supported by the RIKEN TRIP initiative, JSPS KAKENHI Grant Number 23KJ1154, 24K17047, and the Applied Quantum Computing Challenge Program at the National Research Council of Canada.

\mciteSetMidEndSepPunct{}{\ifmciteBstWouldAddEndPunct.\else\fi}{\relax}
\bibliographystyle{JHEP}
\bibliography{ref.bib}

@article{Hayden:2005sqo,
    author = "Hayden, Patrick and Leung, Debbie W. and Winter, Andreas",
    title = "{Aspects of Generic Entanglement}",
    eprint = "quant-ph/0407049",
    archivePrefix = "arXiv",
    doi = "10.1007/s00220-006-1535-6",
    journal = "Commun. Math. Phys.",
    volume = "265",
    number = "1",
    pages = "95--117",
    year = "2006"
}

@article{Li:2025nxv,
    author = "Li, Zhi and Mori, Takato and Yoshida, Beni",
    title = "{Tripartite Haar random state has no bipartite entanglement}",
    eprint = "2502.04437",
    archivePrefix = "arXiv",
    primaryClass = "quant-ph",
    reportNumber = "YITP-25-15, RUP-25-8",
    month = "2",
    year = "2025"
}

@article{Mori:2024gwe,
    author = "Mori, Takato and Yoshida, Beni",
    title = "{Does connected wedge imply distillable entanglement?}",
    eprint = "2411.03426",
    archivePrefix = "arXiv",
    primaryClass = "hep-th",
    reportNumber = "YITP-24-149",
    doi = "10.1007/JHEP01(2026)125",
    journal = "JHEP",
    volume = "01",
    pages = "125",
    year = "2026"
}

@article{TomiokaSuzuki,
  author        = {Ryota Tomioka and Taiji Suzuki},
  title         = {Spectral norm of random tensors},
  eprint        = {1407.1870},
  archivePrefix = {arXiv},
  primaryClass  = {math.ST},
  year          = {2014}
}

@article{ZhuChenHayashi,
  author        = {Zhu, Huangjun and Chen, Lin and Hayashi, Masahito},
  title         = {Additivity and non-additivity of multipartite entanglement measures},
  journal       = {New Journal of Physics},
  volume        = {12},
  pages         = {083002},
  year          = {2010},
  doi           = {10.1088/1367-2630/12/8/083002},
  eprint        = {1002.2511},
  archivePrefix = {arXiv},
  primaryClass  = {quant-ph}
}

@article{Aubrun:2014threshold,
    author = "Aubrun, Guillaume and Szarek, Stanislaw J. and Ye, Deping",
    title = "{Entanglement Thresholds for Random Induced States}",
    eprint = "1106.2264",
    archivePrefix = "arXiv",
    primaryClass = "quant-ph",
    doi = "10.1002/cpa.21460",
    journal = "Commun. Pure Appl. Math.",
    volume = "67",
    number = "1",
    pages = "129--171",
    year = "2014"
}

@article{Devetak:2003zfw,
    author = "Devetak, Igor and Winter, Andreas",
    title = "{Distillation of Secret Key and Entanglement from Quantum States}",
    eprint = "quant-ph/0306078",
    archivePrefix = "arXiv",
    doi = "10.1098/rspa.2004.1372",
    journal = "Proc. Roy. Soc. Lond. A",
    volume = "461",
    pages = "207--235",
    year = "2005"
}

@article{Horodecki:2009zz,
    author = "Horodecki, Ryszard and Horodecki, Pawel and Horodecki, Michal and
    Horodecki, Karol",
    title = "{Quantum Entanglement}",
    eprint = "quant-ph/0702225",
    archivePrefix = "arXiv",
    doi = "10.1103/RevModPhys.81.865",
    journal = "Rev. Mod. Phys.",
    volume = "81",
    pages = "865--942",
    year = "2009"
}

@article{Huang_2014,
doi = {10.1088/1367-2630/16/3/033027},
year = {2014},
month = {mar},
publisher = {IOP Publishing},
volume = {16},
number = {3},
pages = {033027},
author = {Huang, Yichen},
title = {Computing quantum discord is NP-complete},
journal = {New Journal of Physics},
}

@article{Lu:2020jza,
    author = "Lu, Tsung-Cheng and Grover, Tarun",
    title = "{Entanglement transitions as a probe of quasiparticles and quantum thermalization}",
    eprint = "2008.11727",
    archivePrefix = "arXiv",
    primaryClass = "cond-mat.stat-mech",
    doi = "10.1103/PhysRevB.102.235110",
    journal = "Phys. Rev. B",
    volume = "102",
    number = "23",
    pages = "235110",
    year = "2020"
}

@article{Shapourian:2020mkc,
    author = "Shapourian, Hassan and Liu, Shang and Kudler-Flam, Jonah and Vishwanath, Ashvin",
    title = "{Entanglement Negativity Spectrum of Random Mixed States: A Diagrammatic Approach}",
    eprint = "2011.01277",
    archivePrefix = "arXiv",
    primaryClass = "cond-mat.str-el",
    doi = "10.1103/PRXQuantum.2.030347",
    journal = "PRXQuantum",
    volume = "2",
    number = "3",
    pages = "030347",
    year = "2021"
}

@article{Vedral:1997hd,
    author = "Vedral, V. and Plenio, M. B.",
    title = "{Entanglement measures and purification procedures}",
    eprint = "quant-ph/9707035",
    archivePrefix = "arXiv",
    doi = "10.1103/PhysRevA.57.1619",
    journal = "Phys. Rev. A",
    volume = "57",
    pages = "1619--1633",
    year = "1998"
}

@article{Vedral:1997qn,
    author = "Vedral, V. and Plenio, M. B. and Rippin, M. A. and Knight, P. L.",
    title = "{Quantifying entanglement}",
    eprint = "quant-ph/9702027",
    archivePrefix = "arXiv",
    doi = "10.1103/PhysRevLett.78.2275",
    journal = "Phys. Rev. Lett.",
    volume = "78",
    pages = "2275--2279",
    year = "1997"
}

@article{Chen:2002ubp,
    author = "Chen, Yi-Xin and Yang, Dong",
    title = "{The Relative Entropy of Entanglement of Schmidt Correlated States and Distillation}",
    eprint = "quant-ph/0204152",
    archivePrefix = "arXiv",
    doi = "10.1023/A:1023469830754",
    journal = "Quant. Inf. Proc.",
    volume = "1",
    number = "5",
    pages = "389--395",
    year = "2002"
}

@article{Audenaert:2002lkg,
    author = "Audenaert, K. and De Moor, B. and Vollbrecht, K. G. H. and Werner, R. F.",
    title = "{Asymptotic Relative Entropy of Entanglement for Orthogonally Invariant States}",
    eprint = "quant-ph/0204143",
    archivePrefix = "arXiv",
    doi = "10.1103/PhysRevA.66.032310",
    journal = "Phys. Rev. A",
    volume = "66",
    pages = "032310",
    year = "2002"
}

@article{Vollbrecht:2001vsi,
    author = "Vollbrecht, K. G. H. and Werner, R. F.",
    title = "{Entanglement measures under symmetry}",
    eprint = "quant-ph/0010095",
    archivePrefix = "arXiv",
    doi = "10.1103/PhysRevA.64.062307",
    journal = "Phys. Rev. A",
    volume = "64",
    number = "6",
    pages = "062307",
    year = "2001"
}

@article{Kim:2010fps,
    author = "Kim, Hungsoo and Hwang, Mi-Ra and Jung, Eylee and Park, DaeKil",
    title = "{Difficulties in analytic computation for relative entropy of entanglement}",
    eprint = "1002.4695",
    archivePrefix = "arXiv",
    primaryClass = "quant-ph",
    doi = "10.1103/PhysRevA.81.052325",
    month = "4",
    year = "2010"
}

@article{Hajdusek:2012eqs,
    author = "Hajdu{\v{s}}ek, Michal and Murao, Mio",
    title = "{Direct evaluation of pure graph state entanglement}",
    eprint = "1207.5877",
    archivePrefix = "arXiv",
    primaryClass = "quant-ph",
    doi = "10.1088/1367-2630/15/1/013039",
    journal = "New J. Phys.",
    volume = "15",
    pages = "013039",
    year = "2013"
}

@article{Markham:2007vcs,
    author = "Markham, Damian and Miyake, Akimasa and Virmani, Shashank",
    title = "{Entanglement and local information access for graph states}",
    eprint = "quant-ph/0609102",
    archivePrefix = "arXiv",
    doi = "10.1088/1367-2630/9/6/194",
    journal = "New J. Phys.",
    volume = "9",
    pages = "194",
    year = "2007"
}

@article{Page:1993df,
    author = "Page, Don N.",
    title = "{Average entropy of a subsystem}",
    eprint = "gr-qc/9305007",
    archivePrefix = "arXiv",
    reportNumber = "ALBERTA-THY-22-93",
    doi = "10.1103/PhysRevLett.71.1291",
    journal = "Phys. Rev. Lett.",
    volume = "71",
    pages = "1291--1294",
    year = "1993"
}

@article{PhysRevLett.77.1,
  title = {Average Entropy of a Quantum Subsystem},
  author = {Sen, Siddhartha},
  journal = {Phys. Rev. Lett.},
  volume = {77},
  issue = {1},
  pages = {1--3},
  numpages = {0},
  year = {1996},
  month = {Jul},
  publisher = {American Physical Society},
  doi = {10.1103/PhysRevLett.77.1},
  url = {https://link.aps.org/doi/10.1103/PhysRevLett.77.1}
}

@article{Hayden:2004mfu,
    author = "Hayden, Patrick and Leung, Debbie and Shor, Peter W. and Winter, Andreas",
    title = "{Randomizing Quantum States: Constructions and Applications}",
    eprint = "quant-ph/0307104",
    archivePrefix = "arXiv",
    doi = "10.1007/s00220-004-1087-6",
    journal = "Commun. Math. Phys.",
    volume = "250",
    number = "2",
    pages = "371--391",
    year = "2004"
}

@article{Mori:2025gqe,
    author = "Mori, Takato",
    title = "{Quantum correlation beyond entanglement: Holographic discord and multipartite generalizations}",
    eprint = "2506.02131",
    archivePrefix = "arXiv",
    primaryClass = "hep-th",
    reportNumber = "RUP-25-11, YITP-25-66",
    month = "6",
    year = "2025"
}

@article{Henderson:2001wrr,
    author = "Henderson, L. and Vedral, V.",
    title = "{Classical, quantum and total correlations}",
    eprint = "quant-ph/0105028",
    archivePrefix = "arXiv",
    doi = "10.1088/0305-4470/34/35/315",
    journal = "J. Phys. A",
    volume = "34",
    number = "35",
    pages = "6899",
    year = "2001"
}

@article{PhysRevLett.88.017901,
  title = {Quantum Discord: A Measure of the Quantumness of Correlations},
  author = {Ollivier, Harold and Zurek, Wojciech H.},
  journal = {Phys. Rev. Lett.},
  volume = {88},
  issue = {1},
  pages = {017901},
  numpages = {4},
  year = {2001},
  month = {Dec},
  publisher = {American Physical Society},
  doi = {10.1103/PhysRevLett.88.017901},
  url = {https://link.aps.org/doi/10.1103/PhysRevLett.88.017901}
}

@inproceedings{Zurek:2011vew,
    author = "Zurek, W. H.",
    title = "{Einselection and Decoherence from an Information Theory Perspective}",
    eprint = "quant-ph/0011039",
    archivePrefix = "arXiv",
    doi = "10.1002/1521-3889(200011)9:11/12<855::AID-ANDP855>3.0.CO;2-K",
    month = "6",
    year = "2011"
}

@article{Brandao:2010iez,
    author = "Brandao, Fernando G. S. L. and Plenio, Martin B.",
    title = "{A Generalization of Quantum Stein's Lemma}",
    eprint = "0904.0281",
    archivePrefix = "arXiv",
    primaryClass = "quant-ph",
    doi = "10.1007/s00220-010-1005-z",
    journal = "Commun. Math. Phys.",
    volume = "295",
    pages = "791",
    year = "2010"
}

@article{Wei:2003qfk,
    author = "Wei, Tzu-Chieh and Goldbart, Paul M.",
    title = "{Geometric measure of entanglement and applications to bipartite and multipartite quantum states}",
    eprint = "quant-ph/0307219",
    archivePrefix = "arXiv",
    doi = "10.1103/PhysRevA.68.042307",
    journal = "Phys. Rev. A",
    volume = "68",
    number = "4",
    pages = "042307",
    year = "2003"
}

@article{Dartois:2024zuc,
    author = "Dartois, Stephane and McKenna, Benjamin",
    title = "{Injective norm of real and complex random tensors I: From spin glasses to geometric entanglement}",
    eprint = "2404.03627",
    archivePrefix = "arXiv",
    primaryClass = "math.PR",
    month = "4",
    year = "2024"
}

@article{Weinbrenner:2025uwb,
    author = {Weinbrenner, Lisa T. and G{\"u}hne, Otfried},
    title = "{Quantifying entanglement from the geometric perspective}",
    eprint = "2505.01394",
    archivePrefix = "arXiv",
    primaryClass = "quant-ph",
    doi = "10.1209/0295-5075/adffb5",
    journal = "EPL",
    volume = "151",
    number = "6",
    pages = "68001",
    year = "2025"
}

@article{Hayashi:2006ppb,
    author = "Hayashi, M. and Markham, D. and Murao, M. and Owari, M. and Virmani, S.",
    title = "{Bounds on Multipartite Entangled Orthogonal State Discrimination Using Local Operations and Classical Communication}",
    eprint = "quant-ph/0506170",
    archivePrefix = "arXiv",
    doi = "10.1103/PhysRevLett.96.040501",
    journal = "Phys. Rev. Lett.",
    volume = "96",
    pages = "040501",
    year = "2006"
}
\end{document}